\documentclass[11pt,letter]{article}
\usepackage{microtype}
\usepackage{amsmath,amsfonts,amssymb,amsthm}
\usepackage{algorithm,algorithmic}
\usepackage{placeins}
\usepackage[margin=1in]{geometry}
\newtheorem{lemma}{Lemma}
\newtheorem{theorem}{Theorem}
\newtheorem{proposition}{Proposition}
\begin{document}
\title{A Refined Analysis of LSH for Well-dispersed Data Points}
\author{Wenlong Mou\thanks{Key Laboratory of Machine Perception, School of EECS, Peking University. Email: \texttt{mouwenlong@pku.edu.cn}} \\
\and
Liwei Wang\thanks{Key Laboratory of Machine Perception, School of EECS, Peking University. Email: \texttt{wanglw@cis.pku.edu.cn}}}
\date{\today}

\maketitle
	
	\begin{abstract}
		Near neighbor problems are fundamental in algorithms for high-dimensional Euclidean spaces. While classical approaches suffer from the curse of dimensionality, locality sensitive hashing (LSH) can effectively solve $\alpha$-approximate $r$-near neighbor problem, and has been proven to be optimal in the worst case. However, for real-world data sets, LSH can naturally benefit from well-dispersed data and low doubling dimension, leading to significantly improved performance.\par
		In this paper, we address this issue and propose a refined analyses for running time of approximating near neighbors queries via LSH. We characterize dispersion of data using $N_{\beta}$, the number of $\beta r$-near pairs among the data points. Combined with optimal data-oblivious LSH scheme, we get a $O\left(\left(1+\frac{4\sqrt{2}\alpha}{\beta}\right)^{\frac{d}{2\alpha^2}}(n+N_{\beta})^{\frac{1}{2\alpha^2}}\right)$ bound for expected query time. For many natural scenarios where points are well-dispersed or lying in a low-doubling-dimension space, our result leads to sharper performance than existing worst-case analysis. This paper not only presents the \emph{first} rigorous proof on how LSHs make use of the structure of data points, but also provides important insights into parameter setting in the practice of LSH beyond worst case. Besides, the techniques in our analysis involve a generalized version of sphere packing problem, which might be of some independent interest.
	\end{abstract}
	
	\section{Introduction}
	Near neighbor search is a fundamental problems in metric spaces, and is playing an increasingly important role in databases~\cite{DBLP:journals/pvldb/SunWQZL14}, machine learning~\cite{DBLP:conf/nips/Shrivastava014} and computer vision~\cite{DBLP:conf/cvpr/Ben-ZrihemZ15}. With the large-scale data set, we usually need a data structure with sub-linear query time, which only visits a small portion of candidates. There are many classical results on near neighbor search in Euclidean spaces with fixed dimensions. Voronoi diagrams partition the space based on nearest neighbors, but the computation of which is formidable for high-dimensional spaces. Tree structures such as k-d trees~\cite{DBLP:journals/cacm/Bentley75} and VP trees~\cite{DBLP:conf/soda/Yianilos93} are effective for low dimensional spaces. However, those data structures usually suffer heavily from the curse of dimensionality, and could not be put into practical use for the emerging large-scale data sets. There are also many works on generalizing those tree structures to high dimensional cases via low-distortion dimensionality reduction~\cite{DBLP:conf/compgeom/Anagnostopoulos15,DBLP:journals/talg/IndykN07} and doubling dimensions~\cite{randapproxdoubling}.\par
	Though it seems hard to derive deterministic algorithms to the exact problems, randomization and approximation allow effective solutions. The approximate randomized formulation was proposed in~\cite{DBLP:conf/stoc/IndykM98,Andoni09thesis} as follows:\par
	\begin{description}
		\item[Randomized $\alpha$-approximate $r$-near neighbor] Given $S=\{x_1,x_2,\ldots, x_n\}\subset \mathbb{R}^d$, $r\in \mathbb{R}^+$, $\alpha>1, \delta\in(0,1)$, construct a data structure such that given any query point $x_0$, if there is some $x_{i}\in S$ s.t. $\Vert x_0-x_i\Vert_2\leq r$, then we can report a point $x_j\in S$ s.t. $\Vert x_0-x_j\Vert_2\leq \alpha r$ with probability at least $1-\delta$.
	\end{description}
	\par
	One of the most successful solutions to this problem was locality sensitive hashing(LSH), which constructs a distribution on hash functions, and makes near neighbors more likely to be hashed into the same bucket. In each query, we only need to visit the buckets where query point have ever been put. So we only need to visit a small proportion of data points in total, and that will guarantee essentially sub-linear query time. Basically, the more sharply collision probability decreases with distance, the better performance will be obtained. Let $p_1$ be the collision probability for points at distance $r$, and $p_2$ be that for points at distance $\alpha r$. The key performance measure for LSH is $\rho=\frac{\log p_1}{\log p_2}$, since $\alpha$-approximate $r$-near neighbor will be solved within query $O\left(n^{\rho}\right)$, with proper parametric setting. The LSHs for Euclidean spaces have been intensively studied, and a series of hashing schemes were proposed with improving $\rho$ parameter~\cite{DBLP:conf/compgeom/DatarIIM04,DBLP:conf/focs/AndoniI06,DBLP:conf/soda/AndoniINR14,DBLP:conf/stoc/AndoniR15}.\par
	Despite their success, however, the classical analyses for locality sensitive hashing methods are not always optimal: they simply treat all distant points as the same. One can see from a simple geometric intuition, that when the number of near pairs is controlled, then the data points have to be dispersed. In this paper, we quantify this idea and present a  refined analysis for LSH. Concretely, we introduce a new parameter as $N_{\beta}=\big|\left\{(x_i,x_j):\Vert x_i-x_j\Vert_2\leq \beta r\right\}\big|$ and show how the performance of LSH can be improved if $N_{\beta}$ increases slowly with $\beta$. In our analyses, we require a slightly stronger uniform LSH condition, namely, the collision probability bound to hold uniformly for any $\alpha>1$. It is easy to see that this condition is satisfied by all existing data-independent LSHs. To characterize the points that are more than $\alpha r$ far from $x_0$, we propose a generalized version of the sphere packing problem and give an upper bound. We proved that for any $\beta>0$, every uniform LSH with $\rho=\rho(\alpha)$ for $\mathbb{R}^d$ guarantees a $O\left(\left(1+O(1)\cdot\frac{\alpha}{\beta}\right)^{\frac{d\rho(\alpha)}{2}}(n+N_{\beta})^{\frac{\rho(\alpha)}{2}}\right)$ query time, by setting the parameters properly. We also show that our dimensionality-dependent analysis for LSH can be easily generalized to the case of doubling metrics, at a loss of constant factor. Compared with classical analyses of LSH, our bounds achieved essentially better performance in two cases: (i) when the dimensionality or the doubling dimension is low (ii) when $N_{\beta}$ remains at $O(n)$ for some large constant $\beta$. In addition to a good explanation on how LSH works better than worst-case analyses on real-world data, this bound can provide more insights for the choice of parameters for LSH. To the best of our knowledge, this is the first theoretical explanation on how LSHs exploit the intrinsic characteristics of data points.\par
	\section{LSH for near neighbor problems}
	In this section, we will give a brief introduction to the formulation of LSH and query algorithms. A precise analysis on the expected query time based on uniform LSH will be presented. We will also summarize the existing works on LSHs for Euclidean spaces.\par
	LSH is formulated as a distribution over hashing functions, for which the probability of collision increases as two points get closer. In the original formulation of LSH, parameter $\alpha$ and $r$ are fixed, and only the collision probability for $\Vert x-y\Vert_2\leq r$ and $\Vert x-y\Vert_2\geq \alpha r$ is considered. A key property in the evaluation of LSH is $\rho=\frac{\log p_1}{\log p_2}$, which depicts how sharply the collision probability decreases with distance. As discussed in \cite{DBLP:conf/focs/AndoniI06,DBLP:conf/soda/AndoniINR14}, we usually need to make some tradeoff between computational cost of $h(\cdot)$ itself and accuracy of LSH. So the actual $\rho$-parameter for LSH is often written as $\tilde{\rho}(\alpha)+o(1)$, where the residual term $o(1)$ term diminishes as $n\rightarrow\infty$, though the ratio between log probability $\tilde{\rho}$ for ideal hashing class is independent of $n$. We will also use this notation.
	\begin{description}
		\item[Locality sensitive hashing]:
		A distribution $\mathcal{H}$ is called $\langle r,\alpha r,p_1,p_2\rangle$-sensitive if for $\forall x,y\in \mathbb{R}^d$:
		\begin{itemize}
			\item $\forall p,q\in \mathbb{R}^d, \Vert p-q\Vert_2\leq r$ we have $Pr_{h\sim \mathcal{H}}\left\{h(p)=h(q)\right\}\geq p_1$
			\item $\forall p,q\in \mathbb{R}^d, \Vert p-q\Vert_2\geq \alpha r$ we have $Pr_{h\sim \mathcal{H}}\left\{h(p)=h(q)\right\}\leq p_2$
		\end{itemize}
	\end{description}
	We slightly strengthen the requirements for LSH in our analysis, where the collision probability gap should be guaranteed uniformly for $\forall \alpha>1$. Fortunately, all the known data-independent LSH do not depend upon parameter $\alpha$ and are suitable for this formulation. In the rest of this paper, we will denote uniform locality sensitive hashings using the term LSH, except when it is specified as data-dependent.
	\begin{description}
		\item[Uniform locality sensitive hashing] A distribution $\mathcal{H}$ is called uniformly $\langle r,\rho=\rho(s)\rangle$-sensitive if for $\forall x,y\in \mathbb{R}^d$ satisfies:\\
		$p(s)\triangleq\mathop{Pr}_{h\sim\mathcal{H}}\left\{h(x)=h(y)\big|\Vert x-y\Vert_2=s\cdot r\right\}$ is a monotonic decreasing function of $s\in(1,+\infty)$ and $\rho(s)=\frac{\log p(1)}{\log p(s)}$.
	\end{description}
	The general algorithm framework for LSH was proposed in \cite{DBLP:conf/stoc/IndykM98,Andoni09thesis}, as described in Algorithm~\ref{framework}.
	\begin{algorithm}[htb]
		\caption{Framework for near neighbor approximation}\label{framework}
		\textbf{Input}: $x_1,x_2,\ldots,x_n\in\mathbb{R}^d,\quad\alpha>1,\quad r\in R^+$\\
		\textbf{Parameters}: $K, L$
		\hrule\vspace{0.12cm}
		\textbf{Preprocessing}:\\
		Sample $L\cdot K$ functions $h_{11},h_{12},\ldots,h_{1k},h_{21},\ldots,h_{2k},\ldots,h_{LK}\sim i.i.d.\mathcal{H}$ and let $g_i=(h_{i1},h_{i2},\ldots,h_{iK}),\quad \forall i=1,2,.\ldots L$.\\
		Construct $L$ hash tables with respect to $\left\{g_i\right\}_{i=1}^{L}$.
		\hrule\vspace{0.12cm}
		\textbf{Query}:\\
		Given query point $x_0$:\\
		For $i=1,2,\ldots,L$:
		\begin{itemize}
			\item Compute $g_i(x_0)$ and locate the hashing bucket.
			\item Traverse the elements in bucket and compute the actual distances, if any $\alpha r$ near neighbor is found, report it and stop.
		\end{itemize}
	\end{algorithm}
	
	The parameter $\langle K,L\rangle$ will be chosen carefully to optimize the performance while guaranteeing success probability. If there is a neighbor $x^*$ within distance $r$, to guarantee $\alpha r$-near approximation, we just need to visit $x^*$ with $(1-\delta)$ probability:
	\begin{equation}\mathop{Pr}_{g\sim\mathcal{H}^{KL}}\left\{\forall g_i, g_i(x_0)\neq g_i(x^*)\right\}\leq \left(1-p(1)^K\right)^L\leq \delta\end{equation}
	So it is sufficient to set the number of rounds run for each query as $L=p(1)^{-K}\log \frac{1}{\delta}$.\par
	Consider the expected number of points visited in each round of hashing computation. If a point at distance at most $\alpha r$ was ever visited, the algorithm would stop and report this point. So at most one point at distance within $\alpha r$ was visited in each round, and other visited points are located farther than $\alpha r$. Let the set of points visited in the $i$-th round be $S_i$.
	\begin{equation}
		\begin{split}
			&\mathop{E}_{h\sim\mathcal{H}}\left[\vert S_i\vert\right]\\
			\leq &1+\sum_{\Vert x_i-x_0\Vert_2\geq \alpha r}\mathop{Pr}_{h\sim\mathcal{H}}\left\{h(x_i)=h(x_0)\right\}^K\\
			=&1+\sum_{\Vert x_i-x_0\Vert_2\geq \alpha r}\exp\left\{\frac{K\cdot\log p(1)}{\rho\left(\frac{\Vert x_i-x_0\Vert_2}{r}+o(1)\right)}\right\}
		\end{split}
	\end{equation}
	In the classical analysis for locality sensitive hashing, we usually relax the inequality by plugging in $\Vert x_i-x_0\Vert_2\geq \alpha r$, and get:
	\begin{equation}
	\mathop{E}_{h\sim\mathcal{H}}\left[\vert S_i\vert\right]\leq 1+ n\cdot p(1)^{\frac{K}{\rho(\alpha)+o(1)}}
	\end{equation}
	However, in many cases, this relaxation cannot be tight simultaneously for points in $S$, and some geometric constraints will force some of the points farther away from $x_0$. This phenomenon is the main focus of this paper, and will be analyzed in detail in the next sections. Here we just use this relaxation and go through the main results by classical analysis of LSH.\par
	By putting all the above together, we can bound the expectation of running time for near neighbor approximation with: (Let $\tau$ be the time required for computing LSH)
	\begin{equation}T(n)=O\left(p(1)^{-K} \left(1+n\cdot p(1)^{\frac{K}{\rho(\alpha)+o(1)}}\right)\tau\log\frac{1}{\delta}\right)\end{equation}
	By choosing the optimal parameter $K$, we get an upper bound for the query running time: $T_{query}(n)=O\left(n^{\rho(\alpha)+o(1)}\tau\log\frac{1}{\delta}\right)$ and the corresponding preprocessing time can be upper bounded with $T_{preprocess}(n)=O\left(n^{1+\rho(\alpha)+o(1)}\tau\log\frac{1}{\delta}\right)$.\par
	Locality sensitive hashing and near neighbor problem have been intensively studied in existing literature. In \cite{DBLP:conf/compgeom/DatarIIM04} a class of locality sensitive hashing was firstly proposed. They obtained a  $\rho(\alpha)=\frac{1}{\alpha}$ performance by projecting the data points to a calibrated real line. Later, a significant improvement was done in \cite{DBLP:conf/focs/AndoniI06}. They first perform random projection and reduce to a low-dimensional space, then the hashing buckets were constructed using random grids of balls. By setting appropriate parameters, they achieved the performance $\rho(\alpha)=\frac{1}{\alpha^2}+o(1)$, asymptotically. As shown in \cite{DBLP:journals/toct/ODonnellWZ14}, this bound is essentially optimal in the worst case.\par 
	Recently, there is also a series of works on data-dependent locality sensitive hashing. Andoni et al., \cite{DBLP:conf/soda/AndoniINR14} first introduced a class of data dependent hashing class, and improved the $\rho$ parameter to $\frac{7}{8\alpha^2}+\frac{O(1)}{\alpha^3})$. In \cite{DBLP:conf/stoc/AndoniR15} it was further improved to $\frac{1}{2\alpha^2-1}+o(1)$, which is proven to be optimal in \cite{DBLP:journals/corr/AndoniR15a}. Unfortunately, since their construction of hashing schemes depend on parameter $\alpha$, they could not be generalized to the uniform LSH case, and those bounds are not suitable for our refined analyses.\par
	\section{Generalized Sphere Packing Problem}
	In the classical analysis of locality sensitive hashing, we relax the estimation for query time by assuming all the data points visited but not accepted in a query are just $\alpha r$ far away from the query point. But this is not usually true in reality: if most of the data points approximately on a sphere with radius $\alpha r$ centered at $x_0$, then the sphere will be "crowded" and there will be many pairs of near neighbors within the data set. In reality, however, most data points in $S$ are far from each other. For parameter $\beta>1$, a fixed constant that is not very large, we will have $N_{\beta}=\big|\left\{(x_i,x_j):\Vert x_i-x_j\Vert_2\leq \beta r\right\}\big|\ll n^2$. So in our refined analysis, we seek to bound the running time of algorithm in terms of not only $n=\big|S\big|$ but also $N_{\beta}$, where $\beta$ is a parameter used for minimize the bound depending on the structure of data. In the following analysis, we make use of a generalized version of famous sphere packing problem: we want to characterize the phenomenon that a set of points must be well-dispersed if there are only a few near pairs.\par
	An intuitive view to this problem is to consider the "worst case", where the two data points either coincide, or be closely packed in $\mathbb{R}^d$ with distance at least $\beta r$. We construct a graph where two vertices are linked if their corresponding points are $\beta r$ near neighbors. Roughly speaking, we want to show that by adjusting the configuration of points into the "worst-case", we will shrink the space those points take without increasing the number of near pairs. The following lemma gives us a quantitative description of this "worst-case" intuition.
	\begin{lemma}\label{graphshrink}
		For an undirected graph $G=\langle V,E\rangle$, with $\big|V\big|=n$. There is a subset of vertices $T\subset V$ and a mapping $\phi: V\rightarrow T$, such that
		\begin{itemize}
			\item $\forall u\in T, \phi(u)=u$ and $\forall v\in V-T, (v,\phi(v))\in E$
			\item $\forall u,v\in T,\quad (u,v)\notin E$
			\item For $u\in T$, let $n_u=\big|\left\{v\in V:\phi(v)=u\right\}\big|$, then we have:
			\begin{equation}
				\sum_{u\in T}\binom{n_u}{2}\leq \big|E\big|
			\end{equation}
		\end{itemize}
	\end{lemma}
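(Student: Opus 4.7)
My plan is to prove Lemma~\ref{graphshrink} by induction on $n=|V|$, building $T$ and $\phi$ via a greedy peeling procedure. The base case $n=0$ is vacuous. For the inductive step, if $G$ contains an isolated vertex $v$, I put $v$ in $T$ with $\phi(v)=v$, apply the hypothesis to $G-v$, and extend $\phi$ trivially; this adds $0$ to both sides of the target inequality, so the step goes through.

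The core case is when every vertex has degree at least $1$. I pick a vertex $v$ of \emph{minimum} degree $d_v\geq 1$, declare $v\in T$, set $\phi(w)=v$ for every $w\in N(v)$, and recurse on the subgraph $G'$ induced by $V\setminus(\{v\}\cup N(v))$. The new pre-image class at $v$ has size $n_v=d_v+1$, contributing $\binom{d_v+1}{2}$ to the left-hand side; by the inductive hypothesis the remainder contributes at most $|E(G')|$. Independence of $T$ is automatic since every vertex added to $T$ in the recursion lies outside $N(v)\cup\{v\}$. So the whole argument reduces to showing that the number of deleted edges (those with at least one endpoint in $\{v\}\cup N(v)$) is at least $\binom{d_v+1}{2}$.

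To verify this edge count I split the deleted edges into three types: the $d_v$ edges incident to $v$; the $e_1$ edges with both endpoints in $N(v)$; and the $e_2$ edges from $N(v)$ to $V\setminus(N(v)\cup\{v\})$. Summing degrees over $N(v)$ and using the minimum-degree property,
\[
d_v^2 \;\leq\; \sum_{w\in N(v)}\deg(w) \;=\; d_v + 2e_1 + e_2,
\]
so $2e_1+e_2\geq d_v(d_v-1)$. Since $e_2\geq 0$, this gives $2(e_1+e_2)\geq 2e_1+e_2\geq d_v(d_v-1)$, hence $d_v+e_1+e_2\geq d_v+\tfrac{d_v(d_v-1)}{2}=\binom{d_v+1}{2}$, as required to close the induction.

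The main obstacle, and the reason the proof is not simply ``take any maximal independent set,'' is the choice of the pivot: the bound is tight on cliques but becomes badly false on stars if the hub is placed in $T$ (that yields $\binom{n}{2}$ on the left against only $n-1$ edges). A maximum-degree rule makes exactly this mistake. Selecting the minimum-degree vertex forces its neighborhood to be large and therefore ``pays'' enough deleted edges to cover the clique-like contribution $\binom{d_v+1}{2}$, which is precisely the balance the degree-sum inequality above exploits. The rest of the argument is routine bookkeeping.
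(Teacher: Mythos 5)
Your proof is correct, and it takes a related but genuinely different route from the paper's. The paper builds $T$ in one pass: it sorts all vertices by increasing degree in $G$, greedily forms a maximal independent set in that order, maps each remaining vertex to a minimum-degree neighbor already in $T$, and then derives the inequality from a single global double count, $2|E|=\sum_{v}d(v)=\sum_{u\in T}\sum_{\phi(v)=u}d(v)\geq\sum_{u\in T}n_u d(u)\geq\sum_{u\in T}n_u(n_u-1)$, using that each $v$ is assigned to a $T$-neighbor of no larger degree and that $n_u\leq d(u)+1$. You instead peel: remove the closed neighborhood of a minimum-degree vertex of the \emph{current} graph, charge that class's contribution $\binom{d_v+1}{2}$ against the deleted edges via the degree-sum inequality over $N(v)$, and recurse. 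The two constructions generally produce different pairs $(T,\phi)$ (in the paper's version a $T$-vertex need not absorb its entire neighborhood, and degrees are taken in $G$ rather than in the shrinking graph), and your local edge-charging replaces the paper's global handshake argument. Both hinge on exactly the insight you articulate with the star example: the pivot must have small degree so that $\binom{n_u}{2}$ is covered by the incident edge mass. If anything your bookkeeping is the more careful of the two: the paper's intermediate step, stated as $d(\phi(u))\geq d(u)$ leading to $\sum_{\phi(v)=u}d(v)\geq d(u)(d(u)+1)$, has its inequality direction garbled as written (with increasing-degree processing one gets $d(\phi(v))\leq d(v)$, and the chain should read $\sum_{\phi(v)=u}d(v)\geq n_u d(u)\geq n_u(n_u-1)$), whereas your induction closes cleanly.
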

	\begin{proof}
		Without loss of generality, we assume the vertices $v_1,v_2,\ldots,v_n$ are sorted in increasing order by degree, i.e., $d(v_1)\leq d(v_2)\leq \ldots\leq d(v_n)$. We can construct $T$ and $\phi$ in the following way:\par
		\begin{algorithm}
			\caption{Construction of $T$ and $\phi$}
			\textbf{Initialization}: $T=\varnothing$\\
			\textbf{Processing}: For $i=1,2,\ldots,n$
			\begin{itemize}
				\item If $\exists u\in T$, s.t. $(u,v_i)\in E$, then we let 
				\begin{equation}\phi(v_i)=\mathop{argmin}_{u\in T,(u,v_i)\in E} d(u)\end{equation}
				\item If such $u$ does not exist, we let $T=T\cup\left\{v_i\right\}$ and $\phi(v_i)=v_i$
			\end{itemize}
		\end{algorithm}
		According to the construction above, we can guarantee the first two requirements in the lemma. Furthermore, since the degrees are sorted in increasing order, we have $d(\phi(u))\geq d(u),\forall u\in V$. Thus
		\begin{equation}
		\begin{split}
			\left|E\right|=&\frac{1}{2}\sum_{v\in V}d(v)=\frac{1}{2}\sum_{u\in T}\left(\sum_{\phi(v)=u} d(v)\right)\\
			\geq&\frac{1}{2}\sum_{u\in T} d(u)\left(d(u)+1\right)\\
			\geq&\sum_{u\in T}\binom{n_u}{2}
			\end{split}
		\end{equation}
	\end{proof}
	(The algorithm described in the lemma is designed in assistance to the theoretical analysis, and does not need to be actually run.)\par
	With the lemma in hand, we are ready to handle the $N_{\beta}$ near pairs in the space, based on which we can lower bound the distance to farther points.
	\begin{theorem}\label{packing}
		For points $x_1,x_2,\ldots,x_n\in \mathbb{R}^d$, $N_{\beta}=\big|\left\{(x_i,x_j):\Vert x_i-x_j\Vert_2\leq \beta r \right\}\big|$. Then $\forall x_0\in \mathbb{R}^d$, we have:
		\begin{equation}\mathop{max}_{1\leq i\leq n}\Vert x_i-x_0\Vert_2\geq \frac{1}{2}\left(\left(\frac{n^2}{2N_{\beta}+n}\right)^{\frac{1}{d}}-1\right)\beta r\end{equation}
	\end{theorem}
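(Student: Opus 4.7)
The plan is to combine Lemma~\ref{graphshrink} with a classical volume/packing argument. First I would form the graph $G=\langle V,E\rangle$ whose vertex set is $\{x_1,\ldots,x_n\}$ and whose edges are exactly the $\beta r$-close pairs, so that $|E|=N_{\beta}$. Apply Lemma~\ref{graphshrink} to obtain the independent subset $T\subset V$ together with the map $\phi$, and let $m=|T|$. Since $T$ is independent in $G$, every pair of points in $T$ lies at Euclidean distance strictly greater than $\beta r$.

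Next I would convert the inequality $\sum_{u\in T}\binom{n_u}{2}\leq N_{\beta}$ into a lower bound on $m$. Using $\sum_{u\in T} n_u=n$ and convexity of $t\mapsto\binom{t}{2}$ (equivalently, Cauchy--Schwarz on $\sum n_u^2\geq n^2/m$), one gets
\begin{equation}
 N_{\beta}\;\geq\;\sum_{u\in T}\binom{n_u}{2}\;=\;\frac{1}{2}\Bigl(\sum_{u\in T} n_u^2-n\Bigr)\;\geq\;\frac{1}{2}\Bigl(\frac{n^2}{m}-n\Bigr),
\end{equation}
which rearranges to $m\;\geq\; \dfrac{n^2}{2N_{\beta}+n}$.

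Now comes the packing step. Centre a ball $B_u$ of radius $\beta r/2$ at each $u\in T$; because distinct points in $T$ are more than $\beta r$ apart, these balls have pairwise disjoint interiors. Let $R=\max_{u\in T}\|u-x_0\|_2$. Every $B_u$ lies inside the ball $B(x_0,R+\beta r/2)$, so a volume comparison in $\mathbb{R}^d$ yields
\begin{equation}
 m\cdot\Bigl(\tfrac{\beta r}{2}\Bigr)^{d}\;\leq\;\Bigl(R+\tfrac{\beta r}{2}\Bigr)^{d},\qquad\text{i.e.}\qquad R\;\geq\;\tfrac{\beta r}{2}\bigl(m^{1/d}-1\bigr).
\end{equation}
Substituting the lower bound on $m$ and noting $\max_i\|x_i-x_0\|_2\geq R$ gives the claimed inequality.

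The only nonroutine step is the passage from $\sum_{u}\binom{n_u}{2}\leq N_{\beta}$ to a usable lower bound on $m$; the rest is essentially standard sphere packing. I don't foresee serious difficulty: the convexity argument is clean, and the volume inclusion only needs the disjointness of the interiors of the balls around $T$, which follows immediately from $T$ being an independent set in $G$. A minor nuisance is that Lemma~\ref{graphshrink} only forces pairs in $T$ to be non-adjacent in $G$, i.e.\ at distance \emph{strictly} greater than $\beta r$; since the packing argument only uses open balls of radius $\beta r/2$, this strict inequality is exactly what is needed and no limiting argument is required.
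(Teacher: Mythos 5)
Your proposal is correct and follows essentially the same route as the paper's own proof: the same graph construction, the same application of Lemma~\ref{graphshrink}, the same Cauchy--Schwarz/convexity step yielding $|T|\geq n^2/(2N_{\beta}+n)$, and the same volume-comparison packing argument with balls of radius $\beta r/2$. Your remark about strict versus non-strict separation of points in $T$ is a harmless refinement; the paper's version with closed balls of disjoint interiors works equally well.
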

	\begin{proof}
		We construct $G=\langle V,E\rangle$ with $V=\left\{v_1,v_2,\ldots,v_n\right\}$ for any pair of vertices $v_i,v_j\in V$, we set $(v_i,v_j)\in E$ if and if only $\Vert x_i-x_j\Vert_2\leq \beta r$. According to Lemma \ref{graphshrink}, we have set $T=\left\{v_{i_1},v_{i_2},\ldots v_{i_t} \right\}$ and mapping $\phi$ for this graph. Let the points in $\mathbb{R}^d$ corresponding to $T$ be $Q=\left\{x_{i_1},x_{i_2},\ldots x_{i_t} \right\}$. By definition we know every two points in $Q$ have distance at least $\beta r$.\par
		According to Cauchy-Schwartz inequality, we have:
		\begin{equation}
		\begin{split}
		n^2=&\left(\sum_{u\in T}n_u\right)^2\leq \left(\sum_{u\in T}n_u^2\right)\left(\sum_{u\in T}1\right)\\
		=&\big|T\big|\cdot \left(n+2\sum_{u\in T}\binom{n_u}{2}\right)
		\leq \big|T\big|(n+2\big|E\big|)
		\end{split}
		\end{equation}
		So we have $\big|Q\big|=\big|T\big|\geq \frac{n^2}{n+2N_{\beta}}$.\par
		Let $r^*=\mathop{max}_{1\leq i\leq n}\Vert x_i-x_0\Vert_2$, consider a ball $B_0=B(x_0,r^*+\frac{\beta r}{2})$, centered at $x_0$ with radius $r^*+\frac{\beta r}{2}$. And for each $\forall x_i, 1\leq i\leq n$, let $d$-dimensional ball $B_i=B(x_i,\frac{\beta r}{2})$. Since $\Vert x_i-x_0\Vert_2\leq r^*$, we have $\bigcup_{i=1}^{n}B_i\subset B_0$. On the other hand, since the points in $Q$ are at least $\beta r$ distant from each other, we have $B_i\cap B_j=\varnothing,\forall x_i,x_j\in Q, i\neq j$. Thus we have:
		\begin{equation}
			\begin{split}
				&\frac{\pi^{\frac{d}{2}}}{\Gamma\left(1+\frac{d}{2}\right)}\left(r^*+\frac{\beta r}{2}\right)^d=Vol\left(B\left(x_0,r^*+\frac{\beta r}{2}\right)\right)\\
				\geq& Vol\left(\bigcup_{x'\in Q}{B(x', \frac{\beta r}{2})}\right)
				=\sum_{x'\in Q}{Vol\left(B(x',\frac{\beta r}{2})\right)}
				=\big|Q\big|\cdot \frac{\pi^{\frac{d}{2}}}{\Gamma\left(1+\frac{d}{2}\right)} \left(\frac{\beta r}{2}\right)^d
			\end{split}
		\end{equation}
		By plugging in the lower bound for $\big|Q\big|$ we get:
		\begin{equation}r^*\geq \left(\left(\frac{n^2}{n+2N_{\beta}}\right)^{\frac{1}{d}}-1\right)\frac{\beta r}{2}\end{equation}
	\end{proof}
	This bound is informative only for relatively low dimensionality. Otherwise, for example, if $d=\omega(log n)$, we will have $\lim_{n\rightarrow +\infty}\left(\frac{n^2}{n+2N_{\beta}}\right)^{\frac{1}{d}}=1$, and the bound converges to zero.  Actually, the geometric structure of $\mathbb{R}^d$ can tell us little information when $d=\omega(\log n)$ and no other constraints are posed. Indeed, the classical analysis for LSH is tight for this case. On the other hand, there are still much we can do in high dimensions: by standard Johnson-Lindenstrauss argument we can restrict dimensionality at the order of $O(\log n)$ while preserving locality; we can also replace the dimensionality of space with doubling dimension, as discussed in Section 5.\par
	\section{Dimension-dependent Refined Analysis for LSH}
	In section 3 we have already proposed a tighter estimation for the distance from query point to the farthest point. This result can be applied in the analysis of locality sensitive hashing, and yield an essentially sharper bound for it.\par
	Given points $x_1,x_2,\ldots,x_n\in\mathbb{R}^d$, we can build hash tables using Algorithm \ref{framework}, whose parameters $K, L$ will be set to optimize the bound later on. When a query point arrives, as the analysis in Section 2, we have:
	\begin{equation}\mathop{E}_{h\sim\mathcal{H}}\left[\vert S_i\vert\right]\leq1+\sum_{\Vert x_i-x_0\Vert_2\geq \alpha r}\exp\left\{\frac{K\cdot\log p(1)}{\rho\left(\frac{\Vert x_i-x_0\Vert_2}{r}\right)}\right\}
	\end{equation}
	Based on the sphere packing results, we are ready to upper bound the value of big summation above.
	\begin{lemma}\label{summation}
		Given $x_1,x_2,\ldots,x_n\in \mathbb{R}^d$, fixed parameters $\alpha\geq 1,\beta>0, \eta>0, p\in(0,1)$, $\rho(\cdot)$ be a monotonic increasing function on $(1,+\infty)$, and let $N_{\beta}=\big|\left\{(i,j):\Vert x_i-x_j\Vert_2\leq \beta r\right\}\big|$, then we have:
		\begin{equation}
		\begin{split}
		&\sum_{\Vert x_i-x_0\Vert_2\geq \alpha r}\exp\left\{\frac{\log p}{\rho\left(\frac{\Vert x_i-x_0\Vert_2}{r}\right)}\right\}\\
		\leq& p^{\frac{1}{\rho(\alpha)}}\left(1+\frac{2(\alpha+\eta)}{\beta}\right)^{\frac{d}{2}}\sqrt{2N_{\beta}+n}+p^{\frac{1}{\rho(\alpha+\eta)}}n
		\end{split}
		\end{equation}
	\end{lemma}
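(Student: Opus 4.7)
The plan is to split the summation at the distance threshold $(\alpha+\eta)r$, handling a ``far'' regime $\|x_i-x_0\|_2 \geq (\alpha+\eta)r$ and a ``near'' regime $\alpha r \leq \|x_i-x_0\|_2 < (\alpha+\eta)r$ separately, and to bound the count in the near regime by means of Theorem~\ref{packing}. For the far regime, I would use the monotonicity of $\rho$ together with the negativity of $\log p$ to replace $\rho(\|x_i-x_0\|_2/r)$ by $\rho(\alpha+\eta)$ inside the exponential, obtaining a per-term upper bound of $p^{1/\rho(\alpha+\eta)}$. Summing over at most $n$ such points yields the second summand $n\cdot p^{1/\rho(\alpha+\eta)}$ of the claimed inequality, without needing any geometric input. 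In the near regime each term is analogously bounded by $p^{1/\rho(\alpha)}$, so the only remaining task is to count how many $x_i$ can sit in the annulus.

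For that count, let $m$ denote the number of points in the annulus. I would apply Theorem~\ref{packing} to this sub-collection of $m$ points: since it is a subset of the original $n$ points, the number of $\beta r$-near pairs is at most $N_\beta$. Theorem~\ref{packing} then guarantees that the farthest of these points from $x_0$ is at least $\tfrac{1}{2}\!\left(\left(\tfrac{m^2}{2N_\beta+m}\right)^{1/d}-1\right)\beta r$. On the other hand, by the definition of the annulus, this maximum is at most $(\alpha+\eta)r$. Combining the two inequalities and solving for $m$ gives $\dfrac{m^2}{2N_\beta+m} \leq \left(1+\dfrac{2(\alpha+\eta)}{\beta}\right)^{d}$.

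To turn this into the clean form in the lemma, I would use the trivial bound $m \leq n$ on the right-hand side: this yields $m^2 \leq \left(1+\tfrac{2(\alpha+\eta)}{\beta}\right)^{d}(2N_\beta+n)$, and taking square roots gives $m \leq \left(1+\tfrac{2(\alpha+\eta)}{\beta}\right)^{d/2}\sqrt{2N_\beta+n}$. Multiplying by the per-term bound $p^{1/\rho(\alpha)}$ produces the first summand, and adding the far-regime contribution completes the proof. I do not expect a major obstacle here; the only delicate point is keeping the direction of the inequality $\rho(s) \mapsto \exp(\log p/\rho(s))$ straight, since $\log p < 0$ flips the direction of monotonicity once, so one must verify carefully that replacing $\rho(\|x_i-x_0\|_2/r)$ by $\rho(\alpha)$ in the near regime and by $\rho(\alpha+\eta)$ in the far regime indeed \emph{increases} the exponential in each case. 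Once this sign-tracking is set up, everything else is algebraic manipulation of Theorem~\ref{packing} together with $m \leq n$.
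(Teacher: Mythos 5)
Your proposal is correct and follows essentially the same route as the paper: split at the threshold $(\alpha+\eta)r$, bound the number of points within that distance via Theorem~\ref{packing} (the paper does this by sorting the points by distance to $x_0$ and bounding the prefix index $k_0$, which is equivalent to your annulus count), and bound each far term by $p^{1/\rho(\alpha+\eta)}$ with at most $n$ such terms. The sign caveat you flag does require $\rho$ to be monotone \emph{decreasing} (as in Theorem~\ref{main} and as holds for actual LSH families), not increasing as mis-stated in the lemma's hypotheses; the paper's own proof implicitly uses the same convention, so this is a typo in the statement rather than a gap in your argument.
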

	In estimating the summation, classical analyses roughly divide the points according to their distances from query point, at threshold $\alpha r$. In Lemma~\ref{summation}, we divide further at threshold $(\alpha+\eta)r$ to get more accurate bounds. The detailed proof for this lemma is deferred to Appendix. We will see from the later analysis that this bound could not be asymptotically improved by more precise dividing, since the points outside $(\alpha+\eta)r$ distance only make tiny contribution to the sum.
	
	By applying the standard strategies for the parameter setting in LSH, here follows our dimension-dependent query time bound, the detailed proof is deferred to Appendix.
	\begin{theorem}\label{main}
		(Main theorem for real dimension)\par
		For $x_1,x_2,\ldots,x_n\in\mathbb{R}^d$, if we use a uniform locality sensitive hashing $\mathcal{H}$ with $\rho=\rho(\alpha)+o(1)$ and $\tau$ computation cost to solve $\alpha$-approximate $r$-near neighbor problem with probability $1-\delta$, then $\forall \beta>0$, with proper parameter selection, the expected query time is upper bounded by: \begin{equation}O\left(\left(1+\frac{2\mu}{\beta}\right)^{\frac{d}{2}\rho(\alpha)}(N_{\beta}+n)^{\frac{1}{2}\rho(\alpha)+o(1)}\tau\log\frac{1}{\delta}\right)\end{equation}
		where $\mu$ satisfies $\rho(\mu)=\frac{1}{2}\rho(\alpha)$ for monotonic decreasing function $\rho(\cdot)$
	\end{theorem}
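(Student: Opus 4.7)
The plan is to start from the general bound
$T(n)=O\!\left(p(1)^{-K}(1+\mathbb{E}|S_i|)\tau\log(1/\delta)\right)$
derived in Section~2 from the LSH framework, and then control $\mathbb{E}|S_i|$ using Lemma~\ref{summation}. Writing $p=p(1)^{K}$, Lemma~\ref{summation} bounds the contribution of far points by
\[
p^{1/\rho(\alpha)}\!\left(1+\tfrac{2(\alpha+\eta)}{\beta}\right)^{\!d/2}\!\sqrt{2N_\beta+n}\;+\;p^{1/\rho(\alpha+\eta)}n,
\]
so the task reduces to tuning the two free parameters $\eta>0$ and $K$ to balance these terms against the $p(1)^{-K}$ factor outside.

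The crucial idea is the choice of $\eta$. I would set $\eta=\mu-\alpha$, where $\mu$ is defined by the functional equation $\rho(\mu)=\tfrac12\rho(\alpha)$ stated in the theorem; this is well-defined because $\rho(\cdot)$ is monotonically decreasing (a standing assumption on uniform LSH). With this choice, the $n$-term acquires the exponent $p(1)^{2K/\rho(\alpha)}$, so that when we eventually balance the $\sqrt{N_\beta+n}$-term with $p(1)^{-K}$, the $n$-term will automatically fall below the other two by an $n/(n+N_\beta)\cdot(1+2\mu/\beta)^{-d}$ factor and can be absorbed into the $O(\cdot)$.

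Next, I would optimize $K$ by equating the dominant $\sqrt{N_\beta+n}$-term with $p(1)^{-K}$. Solving
\[
p(1)^{-K/\rho(\alpha)}=\left(1+\tfrac{2\mu}{\beta}\right)^{\!d/2}\sqrt{2N_\beta+n}
\]
gives $p(1)^{-K}=\left(1+\tfrac{2\mu}{\beta}\right)^{d\rho(\alpha)/2}(N_\beta+n)^{\rho(\alpha)/2}$, up to absolute constants. Substituting back shows all three terms are bounded by this quantity, which, multiplied by the overhead $\tau\log(1/\delta)$, yields the claimed bound. The $o(1)$ in the exponent of $(N_\beta+n)$ is produced by replacing the idealized $\rho(\alpha)$ in the uniform-LSH assumption with the actual $\rho(\alpha)+o(1)$.

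The main obstacle is conceptual rather than computational: one must recognize that the $\sqrt{\,\cdot\,}$ in Lemma~\ref{summation} is the reason a factor of $\tfrac12$ appears in both the dimension exponent and the size exponent, and that to kill the residual $p^{1/\rho(\alpha+\eta)}n$ summand one needs precisely $\rho(\alpha+\eta)=\tfrac12\rho(\alpha)$. A secondary technicality is that one must check that the chosen $K$ is a positive integer (or round up and absorb the loss into $o(1)$), and that $\mu$ is finite, which is automatic whenever $\rho(\cdot)\to 0$ slowly enough at infinity; for all existing data-oblivious Euclidean LSH schemes, $\rho(s)$ is continuous and tends to $0$, so $\mu$ exists and is of order comparable to $\alpha$.
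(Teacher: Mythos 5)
Your proposal is correct and follows essentially the same route as the paper's own proof: the same reduction via Lemma~\ref{summation}, the same choice $\eta=\mu-\alpha$ with $\rho(\mu)=\tfrac{1}{2}\rho(\alpha)$, and the same setting of $K$ so that $p(1)^{-K}=M^{\rho(\alpha)}$ with $M=\left(1+\frac{2\mu}{\beta}\right)^{d/2}\sqrt{2N_\beta+n}$, which makes the residual $n$-term fall below the others exactly as you describe. Your added remarks on rounding $K$ and on the existence of $\mu$ are sensible technicalities the paper leaves implicit.
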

	
	We apply this theorem to the two famous locality sensitive hashing schemes. Since their constructions do not involve parameter $\alpha$ to be known in advance, they actually satisfy the uniform locality sensitive property proposed in Section 2.
	\begin{proposition}\label{weaklsh}
		For the line projection LSH in \cite{DBLP:conf/stoc/IndykM98}, we have $\rho(\alpha)=\frac{1}{\alpha}$, the expected running time is bounded with:
		\begin{equation}O\left(d\left(1+\frac{4\alpha}{\beta}\right)^{\frac{d}{2\alpha}}(N_{\beta}+n)^{\frac{1}{2\alpha}}\log\frac{1}{\delta}\right),\forall \beta>0\end{equation}
	\end{proposition}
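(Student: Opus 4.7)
The plan is to apply Theorem~\ref{main} directly to the line projection LSH of Indyk--Motwani, whose analysis in \cite{DBLP:conf/stoc/IndykM98} established the parameter $\rho(\alpha)=1/\alpha$. First I would confirm that this hashing scheme fits the uniform LSH framework of Section~2: the hash function is defined by projecting onto a random line and partitioning the real line into equal-width buckets, a construction that does not depend on $\alpha$, so the collision probability is a monotone decreasing function $p(s)$ of the (normalized) distance $s$, and the ratio $\log p(1)/\log p(s)$ yields $\rho(s)=1/s$ uniformly for all $s>1$. The per-hash computation cost is dominated by the $d$-dimensional inner product, giving $\tau=O(d)$.

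Next I would compute the auxiliary parameter $\mu$ from the statement of Theorem~\ref{main}. The condition $\rho(\mu)=\tfrac{1}{2}\rho(\alpha)$ becomes $\tfrac{1}{\mu}=\tfrac{1}{2\alpha}$, so $\mu=2\alpha$. Substituting $\rho(\alpha)=1/\alpha$, $\mu=2\alpha$, and $\tau=O(d)$ into the query-time bound of Theorem~\ref{main}, the exponent on the geometric factor becomes $\tfrac{d}{2}\cdot\tfrac{1}{\alpha}=\tfrac{d}{2\alpha}$, the exponent on $(N_\beta+n)$ becomes $\tfrac{1}{2\alpha}$, and the constant $2\mu/\beta$ inside the base becomes $4\alpha/\beta$. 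This gives exactly
\[
O\!\left(d\left(1+\frac{4\alpha}{\beta}\right)^{\frac{d}{2\alpha}}(N_{\beta}+n)^{\frac{1}{2\alpha}}\log\frac{1}{\delta}\right),
\]
as claimed.

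Since this proposition is purely an instantiation of the main theorem, there is no substantive technical obstacle; the argument is a direct substitution. The only point worth stating carefully is the uniformity check, i.e., that the line-projection scheme's $\rho(\cdot)$ is a single monotone function rather than a family indexed by $\alpha$, which is why Theorem~\ref{main} applies to it at all. Everything else is algebraic simplification of the exponents, which I would present as a short calculation immediately after recalling the values of $\rho(\alpha)$, $\mu$, and $\tau$.
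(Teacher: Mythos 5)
Your proposal is correct and matches the paper's treatment: the paper offers no separate proof of Proposition~\ref{weaklsh}, presenting it as a direct instantiation of Theorem~\ref{main} with $\rho(\alpha)=1/\alpha$, $\mu=2\alpha$, and $\tau=O(d)$, exactly the substitution you carry out. Your explicit check that the line-projection scheme is $\alpha$-oblivious and hence a uniform LSH is the same (brief) justification the paper gives in the sentence preceding the proposition.
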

	\begin{proposition}\label{stronglsh}
		For the random grid of ball LSH in \cite{DBLP:conf/focs/AndoniI06}, we have $\rho(\alpha)=\frac{1}{\alpha^2}+o(1)$, the expected running time is bounded with:
		\begin{equation}O\left(d\left(1+\frac{2\sqrt{2}\alpha}{\beta}\right)^{\frac{d}{2\alpha^2}}(N_{\beta}+n)^{\frac{1}{2\alpha^2}+o(1)}\log\frac{1}{\delta}\right),\forall \beta>0\end{equation}
	\end{proposition}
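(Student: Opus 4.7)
The plan is to apply the dimension-dependent main theorem (Theorem~\ref{main}) directly to the random ball carving scheme of~\cite{DBLP:conf/focs/AndoniI06}. First I would verify that this hashing family satisfies the uniform LSH condition introduced in Section~2: the construction samples random projections followed by a Leech-lattice-style ball carving, and crucially the decoder does not need $\alpha$ as an input, so the asymptotic estimate $\rho(s) = \tfrac{1}{s^2} + o(1)$ holds simultaneously for every scale $s > 1$. I would also note that the per-hash evaluation cost can be absorbed into $\tau = O(d)$ (up to lower-order factors that merge into the $o(1)$ in the exponent of $N_\beta + n$).

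Next I would compute the quantity $\mu$ that appears in the main theorem. Since $\rho(\cdot)$ is monotone decreasing and we need $\rho(\mu) = \tfrac{1}{2}\rho(\alpha)$, this becomes $\tfrac{1}{\mu^2} = \tfrac{1}{2\alpha^2}$, hence $\mu = \sqrt{2}\,\alpha$. Plugging this value of $\mu$ into the bound of Theorem~\ref{main} gives the base $\bigl(1 + \tfrac{2\mu}{\beta}\bigr)^{\frac{d}{2}\rho(\alpha)} = \bigl(1 + \tfrac{2\sqrt{2}\alpha}{\beta}\bigr)^{\frac{d}{2\alpha^2}}$ and the sample-size factor $(N_\beta + n)^{\frac{1}{2}\rho(\alpha) + o(1)} = (N_\beta + n)^{\frac{1}{2\alpha^2} + o(1)}$, which are exactly the two non-trivial factors in the target bound.

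Finally, I would combine these with $\tau = O(d)$ and the $\log(1/\delta)$ factor coming from the repetition parameter $L$ in Algorithm~\ref{framework}, which together reproduce the claimed running time. The only place where one has to be slightly careful is bookkeeping the $o(1)$ terms: the residual in $\rho(s) = \tfrac{1}{s^2} + o(1)$ feeds into both the exponent on $(N_\beta + n)$ and, in principle, into the definition of $\mu$; however, since $\rho$ is continuous and strictly monotone in a neighborhood of $\sqrt{2}\alpha$, the perturbation in $\mu$ is $o(1)$ and can be absorbed into the constant inside the base of the first factor. There is no real obstacle here beyond this accounting — the statement is essentially a worked example of Theorem~\ref{main} for the specific $\rho$-curve of the Andoni--Indyk scheme, and the analogous computation for Proposition~\ref{weaklsh} (which yields $\mu = 2\alpha$ from $\rho(\alpha) = 1/\alpha$) serves as a template.
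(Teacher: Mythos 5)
Your proposal is correct and matches what the paper does: the paper gives no separate proof for this proposition, treating it exactly as you do — as a direct instantiation of Theorem~\ref{main} with $\rho(\alpha)=\frac{1}{\alpha^2}+o(1)$, solving $\rho(\mu)=\frac{1}{2}\rho(\alpha)$ to get $\mu=\sqrt{2}\alpha$, and absorbing the hash evaluation cost into $\tau=O(d)$. Your remark on verifying the uniform LSH property (the construction not depending on $\alpha$) is precisely the one-line justification the paper offers before stating both propositions.
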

	This bound on expectation holds uniformly for $\forall \beta>0$, and the optimal $\beta$ can be selected to minimize the bound. Actually, if the points are well-dispersed, i.e., $N_{\beta}$ increases slowly with $\beta$, this bound leads to significantly lower running time for LSH. The classical analysis can be seen as a special case of this dimensionality-dependent analysis, for we have:
	\begin{equation}\lim\limits_{\beta\rightarrow+\infty}\left(1+\frac{2\mu}{\beta}\right)^{\frac{d}{2}\rho(\alpha)}(N_{\beta}+n)^{\frac{1}{2}\rho(\alpha)+o(1)}=n^{\rho(\alpha)+o(1)}\end{equation}
	Despite its sharpness, the major drawback of this bound is that it depends exponentially on the data dimensionality, if the parameter $\alpha$ and $\beta$ are constants. We will explain in the next section on how to overcome it, by further exploiting intrinsic structure of the points.
	\section{Doubling Metric Counterparts}
	Since the analyses in previous sections are based upon the point sets' rate of expansion, instead of their real dimensionality, it would naturally generalize to the case of dimensionality intrinsic in the data and provide a better bound. In this section, we extend our analysis to doubling dimension case. On the one hand, doubling dimension can appropriately capture the phenomena that high-dimensional data are usually lying approximately on a low-dimensional manifold; on the other hand, bounds based on doubling dimension are also applicable to more general metric spaces.\par
	The notion of doubling dimension has been studied in a wide range of literature. There are several different definitions for doubling dimension\cite{DBLP:conf/stoc/Talwar04,DBLP:conf/soda/KrauthgamerL04,DBLP:conf/stoc/KargerR02}. They are equivalent except for an absolute constant factor. Here we adopt the definition in \cite{DBLP:conf/soda/KrauthgamerL04}, for it not only adapts more naturally to our problem, but also generalizes to metrics other than Euclidean distances.
	\begin{description}
		\item[Doubling dimension] The doubling dimension of a metric $X$ is the minimal $d_0$ such that $\forall Y\subseteq X$, there exists a series of subsets $\left\{Y_i\right\}_{i=1}^{2^{d_0}}$, such that $Y\subseteq \bigcup_{i=1}^{2^{d_0}}Y_i$ and 
		\begin{equation}
		\begin{split}
		\mathop{max}_{y,y'\in Y_i}\Vert y-y'\Vert_X\leq& \frac{1}{2}\mathop{max}_{y,y'\in Y}\Vert y-y'\Vert_X\\
		\forall i=&1,2,\ldots 2^{d_0}
		\end{split}
		\end{equation}
	\end{description}
	The following lemma from \cite{DBLP:conf/soda/KrauthgamerL04} characterizes the packing properties for metrics with doubling dimensions, and are widely applied in various problems:
	\begin{lemma}\label{doubling}
		For a metric $X$ with doubling dimension $d_0$, then for any finite subset $Y\subseteq X$ we have
		\begin{equation}\lceil\log \frac{\mathop{max}_{y,y'\in Y}{\Vert y-y'\Vert_2}}{\mathop{min}_{y,y'\in Y}{\Vert y-y'\Vert_2}}\rceil\geq\frac{1}{d_0}\log \big|Y\big|\end{equation}
	\end{lemma}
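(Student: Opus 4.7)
The plan is to iteratively apply the doubling property to cover $Y$ by progressively smaller subsets, halving the diameter at each step, until every covering subset can contain at most one point of $Y$; the number of such subsets then directly bounds $\big|Y\big|$. Let $D=\max_{y,y'\in Y}\Vert y-y'\Vert_X$ and $d=\min_{y\neq y'\in Y}\Vert y-y'\Vert_X$ denote the diameter and the minimum pairwise separation of $Y$. The goal is to establish $\big|Y\big|\leq 2^{d_0\lceil\log(D/d)\rceil}$, from which the stated inequality follows immediately by taking logarithms.

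First I would apply the definition of doubling dimension to $Y$ itself, producing $2^{d_0}$ subsets whose union covers $Y$, each of diameter at most $D/2$. Since the doubling property is required to hold for every subset of $X$, I can reapply it to each of these pieces to get $2^{2d_0}$ subsets of diameter at most $D/4$, and so on. After $k$ rounds, $Y$ is covered by at most $2^{kd_0}$ subsets, each of diameter at most $D/2^k$.

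Next I would choose $k=\lceil\log(D/d)\rceil$, so that $D/2^k\leq d$. In the generic case $D/2^k<d$, any two distinct points of $Y$ lying in a common $k$-th level subset would be at distance strictly smaller than $d$, contradicting the definition of $d$; hence each such subset contains at most one point of $Y$, so $\big|Y\big|\leq 2^{kd_0}$ as required.

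The step I expect to be the main technical subtlety is the dyadic boundary case in which $D/d$ is exactly a power of two, so that $D/2^k=d$ with equality and a single $k$-th level subset could in principle contain several $Y$-points all at the minimum distance $d$. I would handle this by applying the doubling property once more to any such equidistant cluster: each resulting sub-piece has diameter at most $d/2<d$ and therefore contains only one point of $Y$, recovering the claimed bound up to a constant absorbed into the exponent on the right-hand side.
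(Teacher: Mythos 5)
Your argument is the same one the paper sketches: iterate the doubling decomposition, halving the diameter each round, until no covering piece can hold two points of $Y$, then count pieces. In the generic case ($D/d$ not a power of two) your chain is complete and gives exactly the stated inequality, which is more than the paper's proof actually writes out ("by simple calculation we get this result"). The boundary case you flag is not a mere technicality, though: when $D/2^k=d$ the lemma \emph{as stated} is false --- already for any two-point set $Y$ one has $\lceil\log(D/d)\rceil=0$ while $\frac{1}{d_0}\log|Y|=\frac{1}{d_0}>0$ --- so no argument can close that gap. Your extra round of doubling proves the correct version, $\lceil\log(D/d)\rceil+1\geq\frac{1}{d_0}\log|Y|$, at the cost of an additive $1$ on the left; this is harmless downstream, since the proof of Lemma~\ref{doublingpacking} already passes through $\lceil\log x\rceil\leq 1+\log x$ and would only see its constant $\frac{1}{4}$ degrade to $\frac{1}{8}$. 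So: same route as the paper, executed more carefully, and in the process you have exposed an off-by-one in the lemma statement rather than a gap in your own proof.
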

	
	This helps us to establish the doubling-dimension version of Theorem \ref{packing}.
	\begin{lemma}\label{doublingpacking}
		For points $x_1,x_2,\ldots,x_n\in \mathbb{R}^d$ with doubling dimension $d_0$, for $\beta>0$, let $N_{\beta}=\big|\left\{(x_i,x_j):\Vert x_i-x_j\Vert_2\leq \beta r \right\}\big|$. Then $\forall x_0\in \mathbb{R}^d$, we have:
		\begin{equation}\mathop{max}_{1\leq i\leq n}\Vert x_i-x_0\Vert_2\geq \frac{1}{4}\left[\left(\frac{n^2}{2N_{\beta}+2n}\right)^{\frac{1}{d_0+1}}-1\right]\beta r\end{equation}
	\end{lemma}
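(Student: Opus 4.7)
The strategy is to replay the two-step argument of Theorem~\ref{packing}, substituting the Euclidean volume calculation at the second step with the covering bound of Lemma~\ref{doubling}.

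First I would reuse the graph-shrinking step verbatim: build the $\beta r$-neighbor graph $G=\langle V,E\rangle$ on $\{x_1,\ldots,x_n\}$ with $|E|=N_\beta$, invoke Lemma~\ref{graphshrink} to obtain $T$ and $\phi$, and set $Q=\{x_i : v_i\in T\}$. The same Cauchy--Schwarz calculation as in Theorem~\ref{packing} yields $|Q|\geq n^2/(n+2N_\beta)$, and by construction every pair of points of $Q$ is separated by at least $\beta r$.

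Next I would apply Lemma~\ref{doubling} to the finite subset $Y=Q$, which inherits doubling dimension at most $d_0$. Its diameter is at most $2r^*$ by triangle inequality through $x_0$, and its minimum pairwise distance is at least $\beta r$, so Lemma~\ref{doubling} gives $\lceil\log_2(2r^*/(\beta r))\rceil\geq (\log_2|Q|)/d_0$. Exponentiating and using $2^{\lceil x\rceil}\leq 2\cdot 2^x$ produces $4r^*/(\beta r)\geq |Q|^{1/d_0}$, i.e.\ $r^*\geq (\beta r/4)|Q|^{1/d_0}$. Substituting the lower bound on $|Q|$ then yields
\[
r^* \;\geq\; \frac{\beta r}{4}\left(\frac{n^2}{n+2N_\beta}\right)^{1/d_0}.
\]
The target bound follows because $n+2N_\beta\leq 2n+2N_\beta$ and, for any $u\geq 1$, one has $u^{1/d_0}\geq u^{1/(d_0+1)}\geq u^{1/(d_0+1)}-1$; when instead $n^2/(2n+2N_\beta)<1$, the right-hand side of the stated inequality is negative and the claim is vacuous.

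The main technical subtlety is the ceiling in Lemma~\ref{doubling}: it is what costs an extra factor of $2$ relative to the Euclidean version, explaining the constant $\tfrac{1}{4}$ rather than $\tfrac{1}{2}$. The cosmetically weakened exponent $1/(d_0+1)$ and denominator $2n+2N_\beta$ in the statement are looser than the natural $1/d_0$ and $n+2N_\beta$ the argument produces, so no additional work is required to recover them; one only needs to verify that the weakening is monotone as above. A secondary hazard I would keep an eye on is whether one really can keep $Y=Q$ (rather than $Q\cup\{x_0\}$); the diameter bound $2r^*$ uses $x_0$ only as a pivot for triangle inequality, so including $x_0$ is unnecessary and avoids the annoyance that $x_0$ could be arbitrarily close to a point of $Q$. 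Aside from this bookkeeping, the proof is a clean translation of Theorem~\ref{packing}, with the $d$-dimensional volume comparison---the one place Euclidean geometry was essential---replaced by Lemma~\ref{doubling}.
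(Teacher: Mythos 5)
Your proof is correct and follows the paper's route in all essentials: extract a $\beta r$-separated subset via Lemma~\ref{graphshrink} and Cauchy--Schwarz, apply the doubling-dimension packing bound of Lemma~\ref{doubling} (paying a factor of $2$ for the ceiling), and bound the diameter by $2r^*$ via the triangle inequality through $x_0$. The one place you diverge is the treatment of $x_0$: the paper adjoins $x_0$ to the point set before invoking Lemma~\ref{graphshrink}, which is precisely why its statement carries the exponent $\frac{1}{d_0+1}$ and the denominator $2N_\beta+2n$ (one extra point costs one doubling of the dimension and up to $n$ extra near pairs); you instead keep $x_0$ outside the set, using it only as a pivot for the triangle inequality, and thereby obtain the genuinely sharper intermediate bound $r^*\geq\frac{\beta r}{4}\bigl(\frac{n^2}{n+2N_\beta}\bigr)^{1/d_0}$, which you then monotonically weaken to the stated form. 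Your variant is cleaner and strictly stronger; it also sidesteps a small arithmetic wrinkle in the paper's version, where Cauchy--Schwarz on the $(n+1)$-point set actually yields $|T|\geq\frac{(n+1)^2}{3n+1+2N_\beta}$ rather than the claimed $\frac{n^2}{2n+2N_\beta}$ (the two are not comparable in general). Your closing observations --- that the degenerate case $|Q|=1$ forces $\frac{n^2}{2n+2N_\beta}<1$ and hence a negative right-hand side, and that $u^{1/d_0}\geq u^{1/(d_0+1)}\geq u^{1/(d_0+1)}-1$ for $u\geq 1$ --- correctly dispose of the edge cases the paper handles by its separate ``trivial since RHS becomes negative'' branch.
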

	
	By plugging this result into the proof of Theorem \ref{main}, we get:
	\begin{theorem}\label{doublingmain}
		(Query time bound for doubling dimension)\par
		For $x_1,x_2,\ldots,x_n\in\mathbb{R}^d$ with doubling dimension $d_0$, if we use a uniform locality sensitive hashing $\mathcal{H}$ with $\rho=\rho(\alpha)+o(1)$ and $\tau$ computation cost to solve $\alpha$-approximate $r$-near neighbor problem with probability $1-\delta$, then $\forall \beta>0$, with proper parameter selection, the expected query time is upper bounded by: \begin{equation}O\left(\left(1+\frac{4\mu}{\beta}\right)^{\frac{d_0+1}{2}\rho(\alpha)}(N_{\beta}+n)^{\frac{1}{2}\rho(\alpha)+o(1)}\tau\log\frac{1}{\delta}\right)\end{equation}
		where $\mu$ satisfies $\rho(\mu)=\frac{1}{2}\rho(\alpha)$ for monotonic decreasing $\rho(\cdot)$
	\end{theorem}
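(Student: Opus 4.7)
The plan is to adapt the proof of Theorem~\ref{main} essentially line by line, substituting Lemma~\ref{doublingpacking} for Theorem~\ref{packing} at the single step where the geometric packing bound is invoked. Everything else --- the LSH framework, the choice $L=p(1)^{-K}\log(1/\delta)$ of repetitions, the preprocessing cost, and the expected-query-time identity --- is dimension-agnostic and transfers verbatim. The only quantities that change are the three constants hiding in the doubling-dimension packing bound: the factor $1/4$ in place of $1/2$, the exponent $d_0{+}1$ in place of $d$, and the denominator $2N_\beta+2m$ in place of $2N_\beta+m$. My job is to track these cleanly through the summation step and the subsequent balancing of $K$.

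First I would prove a doubling-dimension counterpart of Lemma~\ref{summation}. As in the Euclidean case, split the sum $\sum_{\|x_i-x_0\|\geq \alpha r}\exp\!\bigl(\log p/\rho(\|x_i-x_0\|/r)\bigr)$ at a threshold $(\alpha+\eta)r$. The exterior contributes at most $n\,p^{1/\rho(\alpha+\eta)}$ by monotonicity of $\rho$. For the annulus, let $m$ be the number of points lying there and apply Lemma~\ref{doublingpacking} to these $m$ points with $x_0$ as the pivot (a subset of a $d_0$-doubling space is again at most $d_0$-doubling). This yields
\[
(\alpha+\eta)r \;\geq\; \tfrac{1}{4}\!\left(\!\left(\frac{m^2}{2N_\beta+2m}\right)^{1/(d_0+1)}\!\!-1\right)\!\beta r,
\]
which, after bounding $m\leq n$ in the denominator and rearranging, gives
\[
m \;\leq\; \sqrt{2N_\beta+2n}\,\bigl(1+4(\alpha+\eta)/\beta\bigr)^{(d_0+1)/2}.
\]
Since each of those $m$ annulus points contributes at most $p^{1/\rho(\alpha)}$, the entire sum is at most
\[
p^{1/\rho(\alpha)}\sqrt{2N_\beta+2n}\,\bigl(1+\tfrac{4(\alpha+\eta)}{\beta}\bigr)^{(d_0+1)/2}\;+\;n\,p^{1/\rho(\alpha+\eta)}.
\]

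Second, I would run the standard LSH parameter balancing exactly as in Theorem~\ref{main}. Plugging the above bound into $E[|S_i|]$ and then into $p(1)^{-K}(1+E[|S_i|])\tau\log(1/\delta)$, I choose $K$ to make the first term equal to $1$; this yields $p(1)^{-K}=(2N_\beta+2n)^{\rho(\alpha)/2}\bigl(1+4(\alpha+\eta)/\beta\bigr)^{(d_0+1)\rho(\alpha)/2}$. Then I set $\alpha+\eta=\mu$, so that $\rho(\alpha+\eta)=\tfrac12\rho(\alpha)$ and therefore $p(1)^{K/\rho(\alpha+\eta)}=\bigl(p(1)^{K/\rho(\alpha)}\bigr)^2$, which forces the exterior term $n\cdot p(1)^{K/\rho(\alpha+\eta)}=O(1)$ and hence $E[|S_i|]=O(1)$. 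Multiplying through and absorbing the $\rho(\alpha)+o(1)$ slack of the LSH scheme into the exponent of $(N_\beta+n)$ yields exactly the claimed bound.

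The main obstacle I anticipate is the first step, and it is purely bookkeeping: the inequality $m^2\leq (2N_\beta+2m)\,C$ with $C=(1+4(\alpha+\eta)/\beta)^{d_0+1}$ is quadratic in $m$, and I must confirm that the crude substitution $m\leq n$ inside the denominator only costs a constant factor in the eventual bound rather than something depending on $d_0$ or $\mu$. This is the same slack used to produce the clean $\sqrt{2N_\beta+n}$ factor in Lemma~\ref{summation}, and I would discharge it identically. The analogous argument for preprocessing time and the success-probability calculation are then mechanical and transfer without modification.
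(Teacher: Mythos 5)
Your proposal is correct and follows exactly the route the paper intends: Theorem~\ref{doublingmain} is obtained by substituting Lemma~\ref{doublingpacking} for Theorem~\ref{packing} inside the proofs of Lemma~\ref{summation} and Theorem~\ref{main}, and your constant-tracking ($1/4$ giving $4(\alpha+\eta)/\beta$, exponent $(d_0+1)/2$, and $\sqrt{2N_\beta+2n}$) matches the stated bound. The only blemish is the phrase ``choose $K$ to make the first term equal to $1$'' --- what your formula actually does is set $M\,p(1)^{K/\rho(\alpha)}=1$, i.e.\ $p(1)^{-K}=M^{\rho(\alpha)}$, which is the paper's choice and makes the rest of your balancing go through.
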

	Similar results as Proposition \ref{weaklsh} and Proposition \ref{stronglsh} also holds for doubling dimension. Furthermore, Theorem~\ref{doublingmain} also implies improved bounds for well-dispersed points in general metric spaces. For example, by applying Theorem \ref{doublingmain} to  LSH based on $p$-stable distribution \cite{DBLP:conf/compgeom/DatarIIM04}, we can get the following proposition for general $\ell_p$(the $p$-stable distribution guarantees for $p\in(0,2]$, and we need $p\geq 1$ to guarantee the triangle inequality):
	\begin{proposition}\label{lplsh}
		$\forall p\in[1,2]$, consider the $p$-stable distribution LSH in \cite{DBLP:conf/compgeom/DatarIIM04} which solves approximate near neighbor in $\ell_p$ metric. Assuming the doubling dimension of $\langle X,\Vert\cdot\Vert_p\rangle$ to be $d_0$, we have $\rho(\alpha)=\frac{1}{\alpha}+o(1)$, $\forall \beta>0$, with proper parameter selection, the expected query time is bounded with:
		\begin{equation}O\left(d\left(1+\frac{8\alpha}{\beta}\right)^{\frac{d_0+1}{2\alpha}}(N_{\beta}+n)^{\frac{1}{2\alpha}+o(1)}\log\frac{1}{\delta}\right)\end{equation}
	\end{proposition}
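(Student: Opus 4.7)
The plan is to reduce Proposition \ref{lplsh} to a direct instantiation of Theorem \ref{doublingmain}. First I would verify that the $p$-stable distribution LSH of \cite{DBLP:conf/compgeom/DatarIIM04} meets the \emph{uniform} LSH hypothesis: the hash function $h(x)=\lfloor (a\cdot x+b)/w\rfloor$ with $a$ drawn from a $p$-stable distribution and $b$ uniform on $[0,w]$ does not depend on $\alpha$, so the collision probability $p(s)=\Pr\{h(x)=h(y)\mid \|x-y\|_p=sr\}$ is a well-defined, monotonically decreasing function of $s$ on $(1,\infty)$ by the unimodality of the $p$-stable density. The $\rho$ parameter of this hashing, by \cite{DBLP:conf/compgeom/DatarIIM04}, is $\rho(\alpha)=1/\alpha+o(1)$ for every $\alpha>1$, where the $o(1)$ term comes from the standard large-$w$ limit; this gives us a uniform $\langle r,\rho(s)=1/s+o(1)\rangle$-sensitive hashing family.

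Next I would observe that computing each hash value costs one inner product of $d$-dimensional vectors plus a floor, so $\tau=O(d)$. Before appealing to Theorem \ref{doublingmain}, I would note that Theorem \ref{doublingmain}, Lemma \ref{doublingpacking}, and the underlying packing Lemma \ref{doubling} are stated for $\mathbb{R}^d$ under $\|\cdot\|_2$, but their proofs use only the metric axioms (triangle inequality, the doubling-dimension packing bound of \cite{DBLP:conf/soda/KrauthgamerL04}, and an inclusion of balls). Replacing $\|\cdot\|_2$ with $\|\cdot\|_p$ for $p\ge 1$ is therefore purely cosmetic, and Theorem \ref{doublingmain} applies verbatim with $d_0$ being the doubling dimension of $\langle X,\|\cdot\|_p\rangle$.

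It then remains to compute the constant $\mu$ appearing in Theorem \ref{doublingmain}, defined by $\rho(\mu)=\tfrac{1}{2}\rho(\alpha)$. For $\rho(s)=1/s+o(1)$ this yields $1/\mu=1/(2\alpha)+o(1)$, hence $\mu=2\alpha+o(\alpha)$. Substituting $\rho(\alpha)=1/\alpha+o(1)$, $\mu=2\alpha$, and $\tau=O(d)$ into the bound of Theorem \ref{doublingmain} gives
\begin{equation*}
O\!\left(d\left(1+\frac{8\alpha}{\beta}\right)^{\frac{d_0+1}{2\alpha}}(N_\beta+n)^{\frac{1}{2\alpha}+o(1)}\log\tfrac{1}{\delta}\right),
\end{equation*}
absorbing the $o(1)$ slack in $\mu$ into the $o(1)$ exponent of $(N_\beta+n)$ and the constant inside the exponential base.

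The only genuinely delicate step is the reduction of Theorem \ref{doublingmain} from $\ell_2$ to general $\ell_p$: one must confirm that every invocation of a ball, a volume, or the triangle inequality in the proofs of Lemma \ref{doublingpacking} and Theorem \ref{doublingmain} is in fact an invocation of the abstract \emph{metric} counterpart, rather than an $\ell_2$-specific Euclidean-volume computation. Since Lemma \ref{doublingpacking} is proved via the metric packing Lemma \ref{doubling} (which does not use volumes at all, unlike the $\mathbb{R}^d$ version Theorem \ref{packing}), this verification is routine; beyond it, the rest of the derivation is direct arithmetic substitution.
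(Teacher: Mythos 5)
Your proposal is correct and matches the paper's (implicit) argument exactly: the paper treats Proposition~\ref{lplsh} as a direct instantiation of Theorem~\ref{doublingmain} with $\rho(s)=1/s+o(1)$, giving $\mu=2\alpha$, hence the base $1+\frac{4\mu}{\beta}=1+\frac{8\alpha}{\beta}$, exponent $\frac{d_0+1}{2\alpha}$, and $\tau=O(d)$. Your additional check that Lemma~\ref{doublingpacking} relies only on the triangle inequality and the metric packing bound of Lemma~\ref{doubling} (not on Euclidean volumes) is precisely the point the paper gestures at when it requires $p\geq 1$, so nothing is missing.
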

	\section{When do our bounds work?}
	In this section, we will discuss the parameters in our bounds and show that, in many natural scenarios, where data points are well-dispersed or doubling dimension is low, are bound can significantly improve over classical worst-case bounds for LSH.\par
	In our analyses, the parameter $\beta$ should be chosen carefully in order to minimize our bounds. We are particularly interested in the case where $N_{\beta}$ is approximately at the same order as $n$. Specifically, for a small constant $\epsilon>0$, let $C_{\epsilon}(n)=\sup\{\beta: N_{\beta}<n^{1+\epsilon}\}$. Well-dispersed data will result in larger value of $C_{\epsilon}$.\par
	In the low-dimensional case, namely, $d=o(\log n)$, we use the bound in Theorem~\ref{main}; otherwise we will turn to doubling dimension bound in Theorem~\ref{doublingmain}. For a finite metric space, we have $d_0\leq \log n$ by definition. Let $d_0=\xi \log n$ be the doubling dimension of data set, with $\xi\in(0,1]$.
	Combined with optimal data-independent LSH, we get the following expected query time bound:
	\begin{equation}\frac{\log E[T_{query}]}{\log n}\leq \frac{1}{2\alpha^2}\left(1+\epsilon+\xi\log\left(1+\frac{4\sqrt{2}\alpha}{C_{\epsilon}(n)}\right)\right)+o(1)\end{equation}
	Our bounds are informative in the following scenarios:
	\begin{itemize}
		\item Data points are significantly well-dispersed, i.e., $\lim\limits_{n\rightarrow\infty}C_{\epsilon}(n)=\infty$ for some small $\epsilon$. For this case we have $E[T_{query}]\leq n^{\frac{1+\epsilon}{2\alpha^2}+o(1)}$, since $\xi$ is bounded by one. An example for this condition is a set of sparse vectors in high dimensions, i.e. $d=poly(n)$, where the dimension gets higher when $n$ gets larger, and the data points become more dispersed.
		\item Spaces with significantly low (doubling) dimensions, i.e., $\lim\limits_{n\rightarrow\infty}\xi=0$. For this case our bound becomes $E[T_{query}]\leq n^{\frac{1}{2\alpha^2}+o(1)}$. Though the bound is still polynomial in $n$, it implies that LSH can automatically adapt to low-dimensional space or doubling metrics.
		\item For general case, we assume $C$ and $\xi$ are both $\Theta(1)$. The quality of our bound depends on whether $1+\frac{4\sqrt{2}\alpha}{C}<2^{\frac{1}{\xi}}$ holds. Larger $C$ and smaller $\xi$ will be helpful.
	\end{itemize}
	For the above three cases, our analysis is uniformly sharper than all existing data-oblivious LSH results. It is also sharper than the best data-dependent LSH in the first two cases. Interestingly, our results guarantee sub-linear query time even when $\alpha=1$, while all existing worst-case bounds for LSH query time are informative only for $\alpha>1$. This result implies that LSH can take advantage of well-dispersed data, even if we want to perform exact $r$-near neighbor search.\par
	Furthermore, our analysis shows that the parameter choice designed for worst-case performance can be suboptimal in practice, when the data points satisfy dispersion or doubling dimension structure. Specifically, since the collision probabilities are usually overestimated, bucket sizes can be smaller than needed, namely, $K$ is often set too large in the worst-case-optimal setting. To make use of our bounds in practice, we may incorporate empirical estimates for $N_{\beta}$ and prior knowledge about doubling dimension, and plug into the choice of $K$ in Theorem~\ref{main}.
	\section{Conclusion and Open Questions}
	In this paper, we present a refined analysis for query time of approximate near neighbor via LSH, given well-dispersed data points. Though the previous analyses are tight for worst case, they could not explain how LSHs work better in real data sets where some structures are assumed. We address this issue by introducing $N_{\beta}$, the number of $\beta r$-near pairs, to describe the dispersion of data points. Using a generalized version of sphere packing argument, we present an $O\left(\left(1+\frac{2\mu}{\beta}\right)^{\frac{d}{2}\rho(\alpha)}(N_{\beta}+n)^{\frac{1}{2}\rho(\alpha)+o(1)}\right)$ upper bound for expected query time of LSH, based on proper choice of parameters. We also generalize this result to the case of doubling dimension, and obtained an $O\left(\left(1+\frac{4\mu}{\beta}\right)^{\frac{d_0+1}{2}\rho(\alpha)}(N_{\beta}+n)^{\frac{1}{2}\rho(\alpha)+o(1)}\right)$ bound. Compared with other existing results, these bounds make essential improvements when the data points are dispersed well or have low doubling dimensions.\par
	There are still many problems on explaining the performance of LSHs that are left open:
	\begin{itemize}
		\item In our analyses, the relaxation of inequalities are actually loose, in terms of the constant factor before $\frac{\alpha}{\beta}$. We intend to give a tighter bound in the future, so that the dependence on dispersion parameters can be further relaxed.
		\item The data-dependent schemes\cite{DBLP:conf/stoc/AndoniR15} could not be applied to our analyses directly, since they are not uniform. A more precise analyses on the geometric structures of high-dimensional data sets will shed lights on data-dependent LSHs.
		\item We will also seek quantities other than $N_{\beta}$ that describes characteristics and structures of data points, which can explain the performance of near neighbor algorithms.
	\end{itemize}
	\bibliographystyle{plain}
	\bibliography{ref}
	\section*{Appendix: Deferred Proofs}
	\begin{proof}(Proof of Lemma~\ref{summation})
		Let $x_{p(1)},x_{p(2)},\ldots,x_{p(n)}$ be a permutation of $x_1,\ldots,x_n$ s.t. $\left\{\Vert x_{p(i)}-x_0\Vert_2\right\}_{i=1}^{n}$ is sorted in ascending order. For $\forall k\in\left\{1,2,\ldots,n\right\}$ and $\beta>0$, apparently we have \begin{equation}\big|\left\{(i,j):\Vert x_{p(i)}-x_{p(j)}\Vert_2\leq \beta r,1\leq i<j\leq k\right\}\big|\leq N_{\beta}\end{equation}
		Plugging it into Theorem \ref{packing}, we have:
		\begin{equation}
		\begin{split}
		\Vert x_{p(k)}-x_0\Vert_2=\mathop{max}_{1\leq i\leq k}\Vert x_{p(i)}-x_0\Vert_2
		\geq \left(\left(\frac{k^2}{k+2N_{\beta}}\right)^{\frac{1}{d}}-1\right)\frac{\beta r}{2}
		\geq \left(\left(\frac{k^2}{n+2N_{\beta}}\right)^{\frac{1}{d}}-1\right)\frac{\beta r}{2}
		\end{split}
		\end{equation}
		Let $k_0=\mathop{max}\left\{k: \Vert x_{p(k)}-x_0\Vert_2\leq (\alpha+\eta) r\right\}$, according to the inequality above, we have:
		\begin{equation}
		\begin{split}
		\left(\left(\frac{k_0^2}{n+2N_{\beta}}\right)^{\frac{1}{d}}-1\right)\frac{\beta r}{2}\leq (\alpha+\eta) r\Rightarrow k_0\leq \left(1+\frac{2(\alpha+\eta)}{\beta}\right)^{\frac{d}{2}}\sqrt{n+2N_{\beta}}
		\end{split}
		\end{equation}
		Then we can easily lower bound $\Vert x_{p(i)}-x_0\Vert_2$ with $\alpha r$ for $i\leq k_0$ and bound with $(\alpha+\eta)r$ for $i>k_0$, and get the following result:
		\begin{equation}
		\begin{split}
		\sum_{\Vert x_i-x_0\Vert_2\geq \alpha r}\exp\left\{\frac{\log p}{\rho\left(\frac{\Vert x_i-x_0\Vert_2}{r}\right)}\right\}
		\leq p^{\frac{1}{\rho(\alpha)}}\left(1+\frac{2(\alpha+\eta)}{\beta}\right)^{\frac{d}{2}}\sqrt{2N_{\beta}+n}+p^{\frac{1}{\rho(\alpha+\eta)}}
		\end{split}
		\end{equation}
	\end{proof}
	\begin{proof}(Proof of Theorem~\ref{main})
		We use the algorithmic framework described in Algorithm \ref{framework}, the parameters $K, L$ will be later set to optimize the bound.\par
		Let $p(s)=Pr_{h\sim \mathcal{H}}\left\{h(x)=h(y)\big| \Vert x-y\Vert_2=sr\right\}$ be the collision probability induced by $\mathcal{H}$. As discussed in Section 1, it suffices to set $L=\frac{1}{p(1)^K}\log \frac{1}{\delta}$ to guarantee the collision probability. Then the expected query time is bounded with:
		\begin{equation}
		\frac{T(n)}{\tau\log \frac{1}{\delta}}\leq\frac{1}{p(1)^K}\left(1+\sum_{\Vert x_i-x_0\Vert_2\geq \alpha r}\exp\left\{\frac{K\cdot\log p(1)}{\rho\left(\frac{\Vert x_i-x_0\Vert_2}{r}\right)}\right\}\right)
		\end{equation}
		By applying Lemma \ref{summation} with $p=p(1)^K$, and let $M=\left(1+\frac{2(\alpha+\eta)}{\beta}\right)^{\frac{d}{2}}\sqrt{2N_{\beta}+n}$, we get:
		\begin{equation}
		\begin{split}
		\frac{T(n)}{\tau \log{\frac{1}{\delta}}}\leq  p(1)^{-K}&+ p(1)^{K(\frac{1}{\rho(\alpha)+o(1)}-1)}M\\
			&+p(1)^{K(\frac{1}{\rho(\alpha+\eta)+o(1)}-1)}n
		\end{split}
		\end{equation}
		We set the parameters as follows:\par 
		\begin{equation}K=-\frac{\rho(\alpha)\log M}{\log p(1)},\quad \eta=\mu-\alpha\end{equation}
		From our choice of parameters, we have $p(1)^K=\frac{1}{M^{\rho(\alpha)}}$, and
		\begin{equation}
		M\cdot p(1)^{K\left(\frac{1}{\rho(\alpha)+o(1)}-1\right)}=M^{\rho(\alpha)+o(1)}
		\end{equation}
		\begin{equation}
			\begin{split} 		
			n\cdot p(1)^{K\left(\frac{1}{\rho(\alpha+\eta)+o(1)}-1\right)}
			=nM^{\rho(\alpha)-\frac{\rho(\alpha)}{\rho(\mu)}+o(1)}\leq M^{\rho(\alpha)+o(1)}
			\end{split}
		\end{equation}
		Thus the query time is upper bounded by $O(M^{\rho(\alpha)+o(1)}\tau\log\frac{1}{\delta})$.
	\end{proof}
	\begin{proof}(Proof of Lemma~\ref{doubling})\\
		By definition we can construct a series of subsets $\left\{Y_i\right\}_{i=1}^{2^{d_0}}$ satisfying the doubling metric condition, such that $Y\subseteq \bigcup_{i=1}^{2^{d_0}}Y_i$. Apparently we have:
		\begin{equation}\big|Y\big|\leq \sum_{i=1}^{2^{d_0}}\big|Y_i\big|\end{equation}
		Each $Y_i$ will be a subset with at most half of the radius. We can then recursively divide each $Y_i$ until there are at most 2 points contained in each set, and by simple calculation we get this result.
	\end{proof}
	\begin{proof}(Proof of Lemma~\ref{doublingpacking})\\
		For $\beta>\frac{max\Vert x_i-x_j\Vert_2}{r}$, the inequality is trivial since RHS becomes negative.\par
		For $\beta\in\left(0,\frac{max\Vert x_i-x_j\Vert_2}{r}\right)$, let $X=\left\{x_0,x_1,\ldots,x_n\right\}$. By adding just one point, the doubling dimension will not exceed $d_0+1$. Since adding $x_0$ to the point set will increase the number of near pairs up to $n$, we have $\big|\left\{(i,j):0\leq i<j\leq n:\Vert x_i-x_j\Vert_2\leq \beta r\right\}\big|\leq N_{\beta}+n$. According to Lemma \ref{graphshrink}, we obtain a set of points $T$ with $\big|T\big|\geq \frac{n^2}{2n+2N_{\beta}}$ and $\forall x,x'\in T, \Vert x-x'\Vert_2\geq \beta r$. The doubling dimension of $T$ does not exceed $d_0+1$. By applying Lemma \ref{doubling} to $T$, we get:
		\begin{equation}
		\begin{split}
		\frac{1}{d_0+1}\log \big|T\big|
		\leq\lceil\log \frac{\mathop{max}_{y,y'\in T}{\Vert y-y'\Vert_2}}{\beta r}\rceil
		\leq 1+\log \frac{\mathop{max}_{y,y'\in T}{\Vert y-y'\Vert_2}}{\beta r}
		\end{split}
		\end{equation}
		Thus we have:
		\begin{equation}\mathop{max}_{y,y'\in T}{\Vert y-y'\Vert_2}\geq \frac{\beta r}{2}\big|T\big|^{\frac{1}{d_0+1}}\geq \left(\frac{n^2}{2n+2N_{\beta}}\right)^{\frac{1}{d_0+1}}\frac{\beta r}{2}\end{equation}
		On the other hand, according to the triangle inequality, we have
		\begin{equation}2\mathop{max}_{1\leq i\leq n}\Vert x_i-x_0\Vert_2\geq \mathop{max}_{y,y'\in T}{\Vert y-y'\Vert_2}\end{equation}
		By putting them together, we get:
		\begin{equation}
		\begin{split}
		\mathop{max}_{1\leq i\leq n}\Vert x_i-x_0\Vert_2
		\geq \frac{1}{4}\left[\left(\frac{n^2}{2N_{\beta}+2n}\right)^{\frac{1}{d_0+1}}\right]\beta r 
		\geq \frac{1}{4}\left[\left(\frac{n^2}{2N_{\beta}+2n}\right)^{\frac{1}{d_0+1}}-1\right]\beta r
		\end{split}
		\end{equation}
	\end{proof}
\end{document}